\pgfplotsset{compat=newest}
\newcommand{\R}{\mathbb{R}}
\newcommand{\bmat}[1]{\begin{bmatrix}#1\end{bmatrix}}
\newtheorem{theorem}{Theorem}
\newtheorem{proposition}{Proposition}
\newtheorem{remark}{Remark}
\newtheorem{definition}{Definition}
\newtheorem{assumption}{Assumption}
\let\NAT@parse\undefined
\title{\LARGE \bf Optimization Based Planner--Tracker Design for Safety Guarantees}
\author{He Yin$^\star$, Monimoy Bujarbaruah$^\star$, Murat Arcak, and Andrew Packard  
\thanks{$^\star$ authors contributed equally} %
\thanks{E-mails: \tt\scriptsize{\{he\_yin, monimoyb, arcak, apackard\}@berkeley.edu.}} }
\begin{document}

\maketitle
  \thispagestyle{empty}
\pagestyle{empty}

\begin{abstract}
We present a safe-by-design approach to path planning and control for nonlinear systems. 
The planner uses a low fidelity model of the plant to compute reference trajectories by solving an MPC problem, while the plant being controlled utilizes a feedback control law that tracks those trajectories with an upper-bound on the tracking error. Our main goal is to allow for maximum permissiveness (that is, room for constraint feasibility) of the planner, while maintaining safety after accounting for the tracking error bound. We achieve this by parametrizing the state and input constraints imposed on the planner and deriving corresponding parametrized tracking control laws and tracking error bounds, which are computed \emph{offline} through Sum-of-Squares programming. The parameters are then optimally chosen to maximize planner permissiveness, while guaranteeing safety.
\end{abstract}

\section{Introduction}
Path planning and control of automated systems is a highly researched topic and a number of approaches exist to tackle this problem \cite{lavalle2006planning, frazzoliSurvey, gonzalez2015review}. A widely used approach for path planning and control is Model Predictive Control (MPC) \cite{rawlings2009model, borrelli2017predictive, kouvaritakis2016model}, where a model is used to predict system states over a finite horizon, and a sequence of optimal inputs is synthesized by solving a constrained finite time optimization problem minimizing a suitably chosen cost function. The first optimal input is applied to the system, and then the process is repeated, thus resulting in a receding horizon control strategy. If the ``planning'' model used for MPC predictions and the plant have no discrepancy, then the so called recursive feasibility, as well as stability of such an MPC controller are ensured by suitably choosing ``terminal conditions" in the MPC problem \cite[Chapter~12]{borrelli2017predictive}.
Such feasibility certificates are crucial for safety critical applications, where constraint violations are intolerable at any time during operation. However, under mismatch of planning model and the plant, the MPC optimization problem must be robustified.

Feasibility and stability properties of robust MPC have been studied in detail over the past few decades \cite{kothare1996robust, mayne2000constrained}. For linear systems, Tube MPC \cite{Langson2004RobustMP, Rakovic2012HomotheticTM} is a widely used approach that solves a computationally efficient convex optimization problem for robust control synthesis. 
Although Tube MPC design with feasibility and stability properties are proposed for nonlinear systems in \cite{kohler2019computationally, kollerFelix1}, the control synthesis problem becomes computationally demanding, due to non-convexity of the resulting optimization problem \cite{Boyd:2004:CO:993483}.  

To alleviate this issue, a typical approach in the path planning community is a two layer control architecture of planner--tracker design \cite{borrelliStefano, herbert2017fastrack, summetCT, kousik2018bridging, Singh2018RobustTW, YinStan2019}.
The high-level planning controller is synthesized and implemented online using a low-fidelity planning model, and imposes appropriately chosen constraints on the variables of the planner. 
The low-level tracking controller applied to the plant (referred to as the tracker) simultaneously ensures robust constraint satisfaction in closed loop 
by bounding the tracking error in a set.
The tracking controller is computationally expensive to synthesize, but it is typically a state feedback policy, making it cheaper to implement. Hence, it is synthesized offline a-priori, and the policy is invoked during run-time. 
However, if the constraints imposed on the planner are not chosen appropriately, either the planner can become infeasible, or the error bound can violate tolerable limits, resulting in the tracker violating safety constraints. 


In this paper we propose an optimization based approach to designing a path planning--tracking algorithm for nonlinear systems. We extend the class of applicable systems beyond the ones considered in \cite{herbert2017fastrack, Singh2018RobustTW}. 
Our  contributions are: 
\begin{enumerate}
    \item We introduce an approach for parametrizing the state and input constraints in the MPC planner with  parameters $\theta$. Using SOS programming \cite{Parrilo:00, Jarvis:05} \emph{offline}, we synthesize a parametric error bound $\mathcal{O}^\theta$ for the containment of error between planner and tracker states, and an associated feedback control policy $\kappa^\theta$. 
    
    \item We then solve an optimization problem \emph{offline} to pick the optimal parameter $\theta^\star$, that gives the ``widest" planner state constraint set $\hat{\mathcal{X}}^{\theta^\star}$ 
    such that, when enlarged by the error bound $\mathcal{O}^{\theta^\star}$, it is contained in the constraint set $\mathcal{X}$. Contrary to approaches such as  \cite{Singh2018RobustTW, YinStan2019}, this provides a systematic and optimal way of designing the planner and the associated error bound. 
    
    \item We solve an MPC  problem for the planner imposing constraints $\hat{\mathcal{X}}^{\theta^\star}$ and $\hat{\mathcal{U}}^{\theta^\star}$, and use input policy $\kappa^{\theta^\star}$ to control the plant. If the planner MPC problem is feasible, then satisfaction of all safety constraints are guaranteed for the plant. We demonstrate this with a detailed numerical example. 
\end{enumerate}

\subsection{Notation}
For $\xi \in \mathbb{R}^n$, $\mathbb{R}[\xi]$ represents the set of polynomials in $\xi$ with real coefficients, and $\R^{m}[\xi]$ and $\R^{m\times p}[\xi]$ denote all vector and
matrix valued polynomial functions. The subset $\Sigma[\xi] := \{p = p_1^2 + p_2^2 + ... + p_M^2 : p_1, ..., p_M \in \mathbb{R}[\xi]\}$ of $\mathbb{R}[\xi]$ is the set of SOS polynomials in $\xi$. Row $k$ of a matrix $A$, and element $k$ of a vector $b$ are denoted by $(A)_k$ and $(b)_k$ respectively. Unless defined otherwise, notation $x^j$ denotes a variable $x$ used in the $j$'th iteration of an iterative algorithm. The symbol ``$\leq$'' represents component-wise inequality.

\section{Problem Setup}
In this paper, the control framework as shown in Fig.~\ref{fig:vehicles} has two layers: (i) planning layer (planner), where a planning trajectory with a long time-horizon is generated using a planning model and Model Predictive Control (MPC); (ii) tracking layer (tracker), where tracking control signals are computed for the true plant to track the planned trajectories with bounded error. 
\begin{figure}[t]
	\centering
	\includegraphics[width=0.45\textwidth]{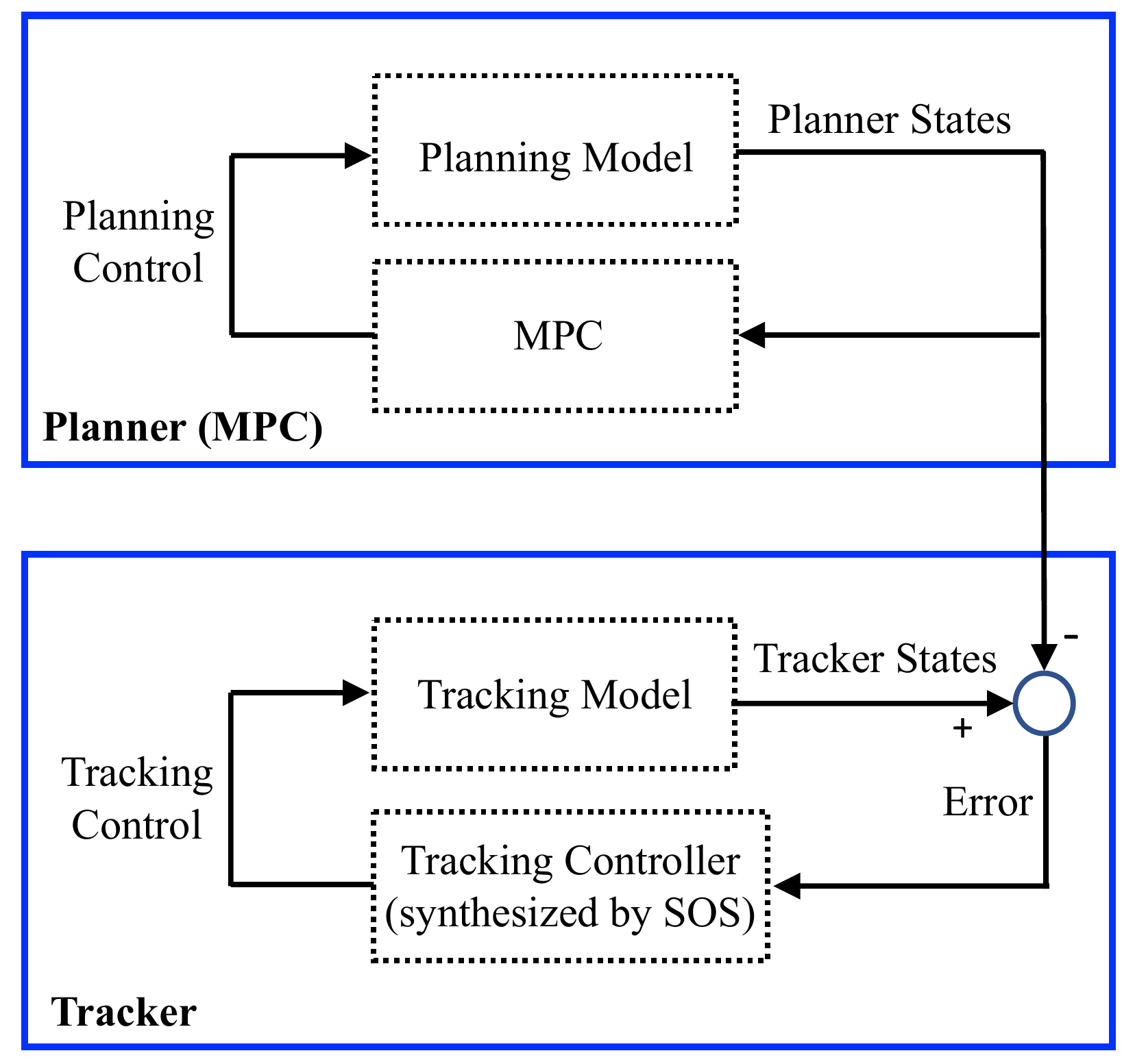}
	\caption{Control framework}
	\label{fig:vehicles}    
\end{figure}

\subsection{Tracking Model}
The high-fidelity model of the plant is referred to as the tracking model. This is an uncertain, input-affine, nonlinear system with parametric uncertainty $\delta(t)$, and is given as 
\begin{align}\label{eq:nonl_system}
    \dot{x}(t) = f(x(t), \delta(t)) + g(x(t), \delta(t))u(t),~\forall t \geq 0, 
\end{align}
where $x(t) \in \mathcal{X} \subseteq \R^n, u(t) \in \mathcal{U} \subseteq \R^m$, $\delta(t) \in \Delta \subseteq \R^{n_{\delta}}$, $f: \R^n \times \R^{n_\delta} \rightarrow \R^n$, and $g : \R^n \times \R^{n_\delta} \rightarrow \R^{n \times m}$. The sets $\mathcal{X}$ and $\mathcal{U}$ are state and control constraint sets imposed on the tracking model, and the set $\Delta := \{\delta \in \R^{n_\delta}: p_\delta(\delta) \leq 0\}$ defines the set of disturbances, where $p_\delta : \R^\delta \rightarrow \R$ is specified by the designer. 

\subsection{Planning Model}
The low-fidelity model, also referred to as the planning model, is a simplified (e.g. linearized) and potentially low-dimensional version of the tracking model, given by
\begin{align}\label{eq:low-fide-model}
    \dot{\hat{x}}(t) = \hat{f}(\hat{x}(t)) + \hat{g}(\hat{x}(t))\hat{u}(t),~\forall t \geq 0,
\end{align}
where $\hat{x}(t) \in \R^{\hat{n}}$, $\hat{u}(t) \in \R^{\hat{m}}$, $\hat{f} : \R^{\hat{n}} \rightarrow \R^{\hat{n}}$ and $\hat{g} : \R^{\hat{n}} \rightarrow \R^{\hat{n}} \times \R^{\hat{m}}$. For now we assume the high- and low-fidelity models have the same state dimension $\hat{n} = n$. The notation $\hat{n}$ is retained here for use in Section \ref{ModelReduction}, where the case when $\hat{n} \leq n$ is addressed.

\subsection{Error Dynamics}
Accounting for the the difference between the states of the low and high-fidelity models yields the error-states $e(t) = x(t) - \hat{x}(t)$. The error dynamics are given as, for all $t \ge 0$,
\begin{align}\label{eq:err_dyn}
& \dot{e}(t)= f_e(e(t),\hat{x}(t),\hat{u}(t), \delta(t)) + g_e(e(t),\hat{x}(t), \delta(t))u(t),
\end{align}
where $f_e(e,\hat{x},\hat{u},\delta) := f(e + \hat{x}, \delta) - \hat{f}(\hat{x}) - \hat{g}(\hat{x})\hat{u}$,  and  $g_e(e,\hat{x},\delta) := g(e + \hat{x}, \delta)$. Let $\mathcal{K}_{\mathcal{U}}:= \{\kappa: \R^n \times \R^{\hat{n}} \times \R^{\hat{m}} \times \R^{n_\delta} \rightarrow \mathcal{U}\}$ define a set of admissible error-state feedback control law. Notice that \eqref{eq:err_dyn} allows for dependence on $\hat{x}$ which is an extension to a richer class of systems than \cite{Singh2018RobustTW, herbert2017fastrack}.

\begin{assumption}\label{ass:initial}
Assume the initial condition of error-state, $e(0)$, starts within the set $\Omega := \{e \in \R^n: p_e(e) \leq 0\}$, that is, $e(0) \in \Omega$, where $p_e : \R^n \rightarrow \R$ is specified by the designer. 
\end{assumption}

\subsection{Planner Formulation}
In the planner, the MPC that generates online planning trajectories solves
\begin{equation} \label{eq:MPC_init}
    \begin{array}{llll}
        \displaystyle \min_{\hat{U}_t} & \sum \limits_{k=0}^{N-1}(\hat{x}_{k|t}^\top Q \hat{x}_{k|t} + \hat{u}_{k|t} R \hat{u}_{k|t}  ) + \hat{x}_{N|t}^\top P_N \hat{x}_{N|t} \vspace{2mm} \\ 
        \ \  \text{s.t. }  & 
         \hat{x}_{k+1|t} = \hat{F}_d(\hat{x}_{k|t}, \hat{u}_{k|t}, T_s),\\
         & \hat{x}_{k|t} \in \hat{\mathcal{X}},~ \hat{u}_{k|t} \in \hat{\mathcal{U}},\\
         & \forall k \in \{0,...,N-1\},\\
         & \hat{x}_{t|t} = \hat{x}(t),~\hat{x}_{N|t} \in \hat{\mathcal{X}}_N \subseteq \hat{\mathcal{X}},
    \end{array}
\end{equation}
with $Q,R,P_N \succ 0$, where $\hat{F}_d$ is system \eqref{eq:low-fide-model} discretized with sampling time $T_s$. Let $\hat{x}_{k|t}$ be the predicted planner states at time $t$ with predicted planner inputs $\hat{U}_t = [\hat{u}_{0|t}, \hat{u}_{1|t}, \dots, \hat{u}_{k-1|t}] \in \R^{\hat{m} \times k} $ for all $k \in  \{1,...,N\}$. 
Each prediction instant $k \in \{1,...,N\}$ represents look-ahead time of $kT_s$. 
The planner constraint sets are defined as
\begin{subequations}\label{eq:nomCon}
\begin{align}
    \hat{\mathcal{X}}:= & \{\hat{x} \in \R^{\hat{n}}: \hat{p}_x (\hat{x}) \leq \hat{h}_x \} ,\\
    \hat{\mathcal{U}}:= & \{\hat{u} \in \R^{\hat{m}}: \hat{p}_u (\hat{u}) \leq \hat{h}_u \},
\end{align}
\end{subequations}
with $\hat{p}_x : \R^{\hat{n}} \rightarrow \R, \ \hat{p}_u : \R^{\hat{m}} \rightarrow \R, \ \hat{h}_x \in \R, \ \hat{h}_u \in \R$ chosen by the designer. Terminal conditions $\hat{\mathcal{X}}_N$ and $P_N$ are chosen to ensure feasibility and stability properties \cite{borrelli2017predictive}. After solving \eqref{eq:MPC_init} at any time $t$, we apply the first optimal input $\hat{u}^\star(t) = \hat{u}^\star_{0|t}$ \emph{only} to low-fidelity planner system \eqref{eq:low-fide-model}. We then re-solve \eqref{eq:MPC_init} at next time instant $t+T_s$.

\subsection{Tracker Formulation}
\begin{definition}\label{MPIS}{Robust Infinite-Time Forward Reachable Set:}
Consider the closed-loop error dynamics obtained from \eqref{eq:err_dyn} for all $t \ge 0$, under a given control law $\kappa \in \mathcal{K}_{\mathcal{U}}$ as 
\begin{align} \label{eq:err_closedloop}
    &\dot{e}(t) = f_e(e(t),\hat{x}(t),\hat{u}(t),\delta(t)) + \nonumber \\
    & ~~~~~~~~~g_e(e(t),\hat{x}(t),\delta(t))\kappa(e(t), \hat{x}(t),\hat{u}(t),\delta(t)), 
\end{align}
with $\hat{x}(t)$ and $\hat{u}(t)$ constrained by \eqref{eq:nomCon} for all times $t\geq 0$. Then a \emph{robust infinite-time forward reachable set} $\mathcal{O}$ of $\Omega$, for a given feedback $\kappa$, is defined as 
\begin{align*}
    &\mathcal{O} := \{e(t) \in \R^n: \exists e(0) \in \Omega, \ \hat{x}: \R_+ \rightarrow \hat{\mathcal{X}},~ \hat{u}: \R_+ \rightarrow \hat{\mathcal{U}}, \\
    &~~~~~~~~~~~\delta: \R_+ \rightarrow \Delta, ~  t \ge 0, ~ \text{s.t.} ~ e(t) ~ \text{is a solution to}~ \eqref{eq:err_closedloop}\}.
\end{align*}
\end{definition}
We assume that $\mathcal{O}$ is a compact set. 
The tracking control synthesizes a error-state feedback policy $u(t) = \kappa(e(t),\hat{x}(t),\hat{u}(t),\delta(t))$ with $\kappa \in \mathcal{K}_{\mathcal{U}}$ ensuring containment of the error-states within such an $\mathcal{O}$. We refer to that $\mathcal{O}$ as an ``error bound'', and $\kappa$ as the corresponding ``tracking control" law and they can be obtained by following \cite{YinStan2019} using Sum-of-Squares (SOS) programming. \textcolor{black}{\textcolor{black}{$\mathcal{O}$ is a function of $\hat{\mathcal{X}}, \hat{\mathcal{U}}$ and $\Delta$.} As the volumes of $\hat{\mathcal{X}}$, $\hat{\mathcal{U}}$ and $\Delta$ increase, we tend to get a larger error bound $\mathcal{O}$.}
\begin{remark}
Note that since the planner MPC problem \eqref{eq:MPC_init} is not posed in continuous time, the guarantees of feasibility of planner constraints \eqref{eq:nomCon} hold only at sampled time instants, assuming perfect discretization (although this is valid for linear systems, for nonlinear systems variational methods can be used for obtaining arbitrarily low discretization errors \cite{nair2018discrete}).  
We hereby assume that the planner sample frequency is chosen high enough that all continuous time guarantees hold for this planner-tracker synthesis work. 
\end{remark}

\section{Parametric Approach to Planner--Tracker Design }
The primary goal is to ensure constraint satisfaction on the state $x(t)$ of the tracker evolving according to \eqref{eq:nonl_system} under the control law $\kappa$, i.e $x(t) \in \mathcal{X}$ for all $t \geq 0$. For this we must make sure
\begin{align}
    \hat{\mathcal{X}} \oplus \mathcal{O} \subseteq \mathcal{X}\label{eq:feasCon}.
\end{align}
Note that if $\hat{\mathcal{X}}$ is chosen to be of small volume, \textcolor{black}{the corresponding $\mathcal{O}$ is small, and it is very likely that \eqref{eq:feasCon} will hold}, but it might leave too small room for \eqref{eq:MPC_init} to be feasible. If $\hat{\mathcal{X}}$ is chosen to be too large, \eqref{eq:feasCon} might be violated. To address this trade-off between planner permissiveness and tracker safety, we propose a parametric approach, where we parametrize planner constraint sets \eqref{eq:nomCon} as $\hat{\mathcal{X}}^\theta$ and $\hat{\mathcal{U}}^\theta$, $\theta \in \Theta$. The set $\Theta$ is defined as $\Theta := \{\theta \in \R^{n_\theta}: p_\theta(\theta) \leq 0\}$, where $p_\theta : \R^\theta \rightarrow \R$ is picked by the user. 
Correspondingly, we compute a \emph{parametric forward reachable set} $\mathcal{O}^\theta$ of $\Omega$, where
\begin{align*}
    &\mathcal{O}^\theta := \{e(t): \exists e(0) \in \Omega, \ \hat{x}: \R_+ \rightarrow \hat{\mathcal{X}}^\theta,~ \hat{u}: \R_+ \rightarrow \hat{\mathcal{U}}^\theta, \\
    & ~~~~~~~~~~ \delta: \R_+ \rightarrow \Delta, ~  t \ge 0, ~ \text{s.t.} ~ e(t) ~ \text{is a solution to}~ \eqref{eq:err_closedloop}\},
\end{align*}
and its associated parametric control law $\kappa^\theta$. $\mathcal{O}^\theta$ is referred to as a ``parametric error bound". The existence of a specific parameter $\tilde{\theta} \in \Theta$ which satisfies 
\begin{align}\label{eq:con_feasPar}
    \hat{\mathcal{X}}^{\tilde{\theta}} \oplus \mathcal{O}^{\tilde{\theta}} \subseteq \mathcal{X},
\end{align}
and for which \eqref{eq:MPC_init} is feasible, ensures safety: $x(t) \in \mathcal{X}$ for all $t \geq 0$. We parametrize constraint sets \eqref{eq:nomCon} using $\theta \in \Theta$ as
\begin{align}
    \hat{\mathcal{X}}^\theta:= & \{\hat{x} \in \mathbb{R}^{\hat{n}} : \hat{p}_x (\hat{x}) \leq \hat{h}^\theta_x \},  \label{eq:nomConParam1}\\
    \hat{\mathcal{U}}^\theta:= & \{\hat{u} \in \mathbb{R}^{\hat{m}}: \hat{p}_u (\hat{u}) \leq \hat{h}^\theta_u \} \label{eq:nomConParam2}.
\end{align}
where $\hat{h}_x^\theta: \R^{n_\theta} \rightarrow \R$ and $\hat{h}_u^\theta: \R^{n_\theta} \rightarrow \R$.
Given the parametrized constraint sets \eqref{eq:nomConParam1}--\eqref{eq:nomConParam2},
we take two steps, 
\begin{enumerate}[A)]
\item compute a parametric  error bound $\mathcal{O}^\theta$, and an associated feedback policy denoted by $\kappa^\theta$, which may vary as the parameter $\theta$ is varied;
\item solve an optimization problem to pick the ``best" $\tilde{\theta}$ that gives the most permissive planner (widest $\hat{\mathcal{X}}^{\tilde{\theta}}$) subject to the safety constraint \eqref{eq:con_feasPar}. 
\end{enumerate}
These steps are elaborated in the following sections. 

\subsection{Parametric Error Bound $\mathcal{O}^\theta$}
We use the following Theorem \ref{Theo1} to compute a parametric error bound $\mathcal{O}^\theta$, as well as an associated feedback control policy, denoted as $\kappa^\theta$. Note that we use the same symbol for a particular real variable in the algebraic
statements as well as the corresponding signal in the dynamical systems, after dropping the time-series argument.

Consider tracker input $u$ defined in \eqref{eq:nonl_system}. We assume the set of constraints on $u$ is a polytope $\mathcal{U} = \{u \in \R^m : H u \leq h\}$, where $H \in \R^{N_0 \times m}$, $h \in \R^{N_0}$, and we overload the notation $\mathcal{K}_\mathcal{U}$ as $\mathcal{K}_{\mathcal{U}}:= \{\kappa: \R^n \times \R^{\hat{n}} \times \R^{\hat{m}} \times \R^{n_\delta} \times \R^{n_\theta} \rightarrow \mathcal{U}\}$.

\begin{theorem} \label{Theo1}
Let Assumption~\ref{ass:initial} hold. Given the error dynamics with mappings $f_e: \mathbb{R}^n \times \mathbb{R}^{\hat{n}} \times \mathbb{R}^{\hat{m}} \times \R^{n_\delta} \rightarrow \mathbb{R}^n$, $g_e: \mathbb{R}^n \times \mathbb{R}^{\hat{n}} \times \R^{n_\delta} \rightarrow \mathbb{R}^n$, $\gamma \in \R$, $\hat{\mathcal{X}}^\theta \subseteq \mathbb{R}^{\hat{n}}$, $\hat{\mathcal{U}}^\theta \subseteq \mathbb{R}^{\hat{m}}$, $\Theta \subseteq \R^{n_\theta}$, $\Delta \subseteq \R^{n_\delta}$, $\Omega \subseteq \R^n$, $H \in \R^{N_0 \times m}$ and $h \in \R^{N_0}$, if there exists a $\mathcal{C}^1$ function $V: \mathbb{R}^{n} \times \mathbb{R}^{n_\theta} \rightarrow \mathbb{R}$, and $\kappa: \mathbb{R}^n \times \mathbb{R}^{\hat{n}} \times \mathbb{R}^{\hat{m}} \times \R^{n_\delta} \times \R^{n_\theta} \rightarrow \mathbb{R}^m$, such that for all $\delta \in \Delta, \hat{x} \in \hat{\mathcal{X}}^\theta,  \hat{u} \in \hat{\mathcal{U}}^\theta,$ the following constraints hold,
\begin{subequations}
\begin{align}
& \frac{\partial V(e,\theta)}{\partial e}\cdot(f_e(e, \hat{x}, \hat{u}, \delta)+g_e(e, \hat{x}, \delta)\kappa(e,\hat{x},\hat{u},\delta,\theta)) \leq 0, \nonumber \\
&  ~~~~~~~~~~~~~~~~~~~\forall (e, \theta) \in \R^n \times \Theta, \ \text{s.t.} \ V(e, \theta) = \gamma, \label{eq:Vcond} \\
& \{e: V(e,\theta) \leq \gamma\} \subseteq \{e : H \kappa(e, \hat{x}, \hat{u}, \delta, \theta) \leq h\}, \ \forall \theta \in \Theta, \nonumber \\
\label{eq:ControlConstr}\\
 & \textcolor{black}{\Omega \times \Theta \subseteq \{(e,\theta): V(e,\theta) \leq \gamma\},} \label{eq:InitCond} 
\end{align}
\end{subequations}
then the $\theta$-dependent sub-level set \begin{align*}
\mathcal{O}^\theta := \{e: V(e,\theta) \leq \gamma\}
\end{align*}
is a parametric forward reachable set of $\Omega$ under the control policy $\kappa^\theta:= \kappa(\cdot,\cdot,\cdot,\cdot,\theta) \in \mathcal{K}_{\mathcal{U}}$.
\end{theorem}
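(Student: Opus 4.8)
The plan is to show that the sublevel set $\mathcal{O}^\theta = \{e : V(e,\theta) \le \gamma\}$ is forward invariant for the closed-loop error dynamics \eqref{eq:err_closedloop} under $\kappa^\theta$, and then to combine this with the inclusion \eqref{eq:InitCond} to conclude that every reachable error state lies in $\mathcal{O}^\theta$. Throughout I fix an arbitrary $\theta \in \Theta$, an arbitrary initial condition $e(0) \in \Omega$, and arbitrary admissible signals $\hat{x}(\cdot)$, $\hat{u}(\cdot)$, $\delta(\cdot)$ taking values in $\hat{\mathcal{X}}^\theta$, $\hat{\mathcal{U}}^\theta$, $\Delta$ respectively; the goal is then to prove $V(e(t),\theta) \le \gamma$ for all $t \ge 0$. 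By \eqref{eq:InitCond} we have $\Omega \times \Theta \subseteq \{(e,\theta): V(e,\theta) \le \gamma\}$, so the trajectory starts inside $\mathcal{O}^\theta$, i.e. $V(e(0),\theta) \le \gamma$.

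First I would establish the invariance by a Nagumo-type barrier argument. Define $\phi(t) := V(e(t),\theta) - \gamma$, which is continuous since $V \in \mathcal{C}^1$ and $e(\cdot)$ solves \eqref{eq:err_closedloop}. Suppose, for contradiction, that the set $\{t \ge 0 : \phi(t) > 0\}$ is nonempty, and let $t_0$ be its infimum. Since $\phi(0) \le 0$ and $\phi$ is continuous, $t_0 > 0$ and $\phi(t_0) = 0$, i.e. $V(e(t_0),\theta) = \gamma$. Because the trajectory's signals lie in $\hat{\mathcal{X}}^\theta$, $\hat{\mathcal{U}}^\theta$, and $\Delta$, hypothesis \eqref{eq:Vcond} applies pointwise at $t_0$, giving
\[
\dot{\phi}(t_0) = \frac{\partial V(e(t_0),\theta)}{\partial e}\cdot \dot{e}(t_0) \le 0 .
\]
The key step is then to argue that $\dot\phi(t_0) \le 0$ forces $\phi$ to remain nonpositive in a right-neighborhood of $t_0$, contradicting the definition of $t_0$ as the infimum of times at which $\phi$ is strictly positive.

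I expect the tangency case, when $\dot\phi(t_0) = 0$ rather than $\dot\phi(t_0) < 0$, to be the main obstacle, since a single instant of zero derivative does not by itself prevent $\phi$ from subsequently rising above zero. The strict case $\dot\phi(t_0) < 0$ is immediate: $\phi$ is then strictly decreasing at $t_0$, so $\phi(t) < 0$ for $t$ slightly larger than $t_0$, contradicting the choice of $t_0$. To close the boundary case one can either strengthen \eqref{eq:Vcond} to a strict inequality on $\{V = \gamma\}$, or invoke a comparison-lemma/viability argument that exploits the fact that \eqref{eq:Vcond} holds at \emph{every} boundary point the trajectory might revisit, so $\phi$ can never strictly cross zero from below. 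Either route renders $\mathcal{O}^\theta$ forward invariant.

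Finally I would assemble the remaining pieces. Forward invariance together with $e(0) \in \mathcal{O}^\theta$ shows that the trajectory remains in $\mathcal{O}^\theta$ for all $t \ge 0$; since the initial condition and the admissible signals were arbitrary, the forward reachable set of $\Omega$ under $\kappa^\theta$ is contained in $\mathcal{O}^\theta$, which is the claimed characterization. It remains to verify that $\kappa^\theta \in \mathcal{K}_{\mathcal{U}}$, i.e. that the applied input stays admissible. Because the trajectory lies in $\mathcal{O}^\theta = \{V(e,\theta) \le \gamma\}$, condition \eqref{eq:ControlConstr} guarantees $H\kappa(e(t),\hat{x}(t),\hat{u}(t),\delta(t),\theta) \le h$, i.e. $u(t) \in \mathcal{U}$, for all $t \ge 0$; hence the closed loop \eqref{eq:err_closedloop} is well defined with an admissible control throughout, completing the argument.
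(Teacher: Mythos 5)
Your proposal is correct and, at its core, takes the same route as the paper: reduce the theorem to forward invariance of the sublevel set $\mathcal{O}^\theta=\{e: V(e,\theta)\le\gamma\}$, with \eqref{eq:InitCond} placing $e(0)$ inside it and \eqref{eq:Vcond} keeping the trajectory from leaving. The paper's own proof treats $\theta$ as an augmented state with dynamics $\dot{\theta}=0$ and then simply \emph{asserts} the invariance in one sentence; your device of fixing an arbitrary constant $\theta\in\Theta$ is the same idea, and your first-crossing contradiction argument supplies exactly the detail the paper leaves implicit. Two points of comparison are worth recording. First, the tangency obstruction you flag is genuine, and the paper's proof silently steps over it: a non-strict inequality imposed only on the level set $\{e: V(e,\theta)=\gamma\}$ does not by itself yield invariance without regularity of $V$. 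For instance, take $n=1$, $V(e,\theta)=(e^2-1)^3$, $\gamma=0$, and closed-loop dynamics $\dot{e}=1$: the sublevel set is the compact interval $[-1,1]$, condition \eqref{eq:Vcond} holds because $\partial V/\partial e$ vanishes at the boundary points $e=\pm 1$, yet the trajectory from $e(0)=1$ exits immediately. The clean repair is the viability route you name: Nagumo's theorem applies provided solutions of \eqref{eq:err_closedloop} are unique and $\gamma$ is a regular value of $V(\cdot,\theta)$ (i.e., $\partial V/\partial e\ne 0$ on the level set), in which case \eqref{eq:Vcond} says precisely that the closed-loop vector field lies in the tangent cone of $\mathcal{O}^\theta$ at every boundary point; note that your alternative phrasing (``\eqref{eq:Vcond} holds at every boundary point the trajectory might revisit, so $\phi$ can never strictly cross zero from below'') is circular as written and should be replaced by this. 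Second, you verify the claim $\kappa^\theta\in\mathcal{K}_{\mathcal{U}}$ by invoking \eqref{eq:ControlConstr} along the trajectory, which is part of the statement of Theorem~\ref{Theo1}; the paper's proof never addresses this, so on that point your argument is strictly more complete than the original.
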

\begin{proof}
We have, $\theta$ as a vector of uncertain parameters with dynamics $\dot{\theta} = 0$. Let $\theta(0) = \theta^0$, for all possible $\theta^0 \in \Theta$. For all the augmented states $(e(0),\theta(0)) \in \Omega \times \Theta \subseteq \{(e,\theta): V(e,\theta) \leq \gamma\}$, (i.e. $e(0) \in \{e: V(e,\theta^0) \leq \gamma\}$), we have $e(t) \in \{e: V(e,\theta^0)\leq \gamma \}$, implying $\{e: V(e,\theta) \leq \gamma\}$ is a parametric forward reachable set of $\Omega$.
\end{proof}

\begin{remark}
Since $\mathcal{O}^\theta$ is also a positive invariant set for error-states for all $\theta \in \Theta$, after we obtain it based on $\Omega$, it can then serve as the set of initial conditions for error-states. 
\end{remark}

We use Sum-of-Squares (SOS) programming \cite{Parrilo:00, Jarvis:05} in finding storage function $V$ and control law $\kappa^\theta$ by solving the following non-convex optimization problem. We restrict $p_e \in \R[e]$, $p_\theta \in \R[\theta]$, $p_\delta \in \R[\delta]$, $\hat{p}_x \in \R[\hat{x}]$, $\hat{p}_u \in \R[\hat{u}]$, $f_e \in \R^{n}[(e, \hat{x}, \hat{u}, \delta)]$, $g_e \in \R^{n \times m}[(e, \hat{x}, \delta)]$, $V \in \R[(e, \theta)]$ and $\kappa \in \R[(e,\hat{x},\hat{u},\delta,\theta)]$. 
\begingroup
\allowdisplaybreaks
\begin{subequations} \label{eq:Vk_sosopt}
    \begin{align}
        \displaystyle \min_{V, \kappa, s} ~&
        \mathrm{volume}(\mathcal{O}^\theta) \vspace{2mm} \nonumber \\ 
        \ \  \mathrm{s.t.} ~
        & s_2 \in \R[(e,\hat{x},\hat{u},\delta,\theta)],\  s_{13}, s_{14} \in \Sigma[(e,\theta)], \nonumber \\
        & s_j \in \Sigma[(e,\hat{x},\hat{u},\delta,\theta)], ~~~~\hspace{0.3mm}\forall j \in \{3,\dots,6\}, \nonumber \\
        & (s_{l})_k \in \Sigma[(e,\hat{x},\hat{u},\delta,\theta)], \ \forall l \in \{8,\dots,12\}, \label{eq:Vk_sosopt1} \\
        & -\left(\frac{\partial V}{\partial \theta}\right)_i + (s_{7})_i \cdot p_\theta \in \Sigma[(e,\theta)],  (s_{7})_i \in \Sigma[(e,\theta)] \nonumber  \\
        &~~~~~~~~~~~~~~~~~~~~~~~~~~~~~~~~~\forall i \in \{1,\dots,n_\theta\},  \label{eq:Vk_sosopt3}\\
         &-\frac{\partial V}{\partial e}\cdot (f_e + g_e \kappa) - s_2\cdot (V - \gamma)  + s_3 \cdot (\hat{p}_x - \hat{h}_x^\theta) \nonumber \\
         & \quad + s_4 \cdot (\hat{p}_u - \hat{h}_u^\theta) + s_5 \cdot p_\theta  + s_6 \cdot p_\delta \nonumber \\
         & \quad ~~~~~~ \in \Sigma[(e, \hat{x}, \hat{u}, \delta, \theta)],  \label{eq:Vk_sosopt2}\\
         & (h)_k - (H)_k \kappa + (s_{8})_k \cdot (V - \gamma) + (s_{9})_k \cdot (\hat{p}_x - \hat{h}_x^\theta)  \nonumber \\
         & \quad + (s_{10})_k \cdot (\hat{p}_u - \hat{h}_u^\theta) + (s_{11})_k \cdot p_\theta + (s_{12})_k \cdot p_\delta \nonumber \\
         & \quad ~~~~~~ \in \Sigma[(e, \hat{x}, \hat{u}, \delta, \theta)], ~\forall k \in \{1,...,N_0\}, \label{eq:Vk_sosopt4} \\
         &-(V - \gamma) + s_{13} \cdot p_e + s_{14} \cdot p_{\theta} \in \Sigma[(e,\theta)], \label{eq:Vk_sosopt5}
    \end{align}
\end{subequations}
\endgroup
SOS polynomials $s$ serve as the S-procedure certificates, and are usually referred to as ``multiplier polynomials''. Constraints \eqref{eq:Vk_sosopt2}--\eqref{eq:Vk_sosopt5}, when feasible, are sufficient conditions for \eqref{eq:Vcond}--\eqref{eq:InitCond}, respectively. 
The rationale for constraint \eqref{eq:Vk_sosopt3} is elaborated in Proposition~\ref{prop1}. Solving optimization \eqref{eq:Vk_sosopt} directly can be challenging, since it is bi-linear in decision variables $V$ and $(\kappa, s_2~ \mathrm{and~} (s_{8})_k)$. 
Similar to \cite{YinBackward19}, in Algorithm~\ref{alg:alg1} we decompose optimization \eqref{eq:Vk_sosopt} into two convex sub-problems to iteratively search between two sets of decision variables. Note that the initialization $V^0$ to Algorithm \ref{alg:alg1} can be computed using \cite[Algorithm~2]{YinStan2019}. 


\begin{algorithm} [h]
	\caption{Computing $\mathcal{O}^\theta$ and $\kappa^\theta$}
	\label{alg:alg1}
	\begin{algorithmic}[1]
	\Require{function $V^0$ such that \eqref{eq:Vk_sosopt1}--\eqref{eq:Vk_sosopt5} are feasible by proper choice of $s, \kappa, \gamma$ and \textcolor{black}{sub-level sets of $V^0$ are bounded. Maximum iteration count $N_\mathrm{iter}$.}}
	\Ensure{($\kappa$, $\gamma$, $V$) such that with the volume of $\mathcal{O}^\theta$ having been shrunk.}
	\For{$j = 1:N_\mathrm{iter}$}
		\State $\boldsymbol{\gamma}$\textbf{-step}: decision variables: $(s, \kappa,\gamma)$.
			
		Minimize $\gamma$ subject to \eqref{eq:Vk_sosopt1}, \eqref{eq:Vk_sosopt2}--\eqref{eq:Vk_sosopt5} using 
		
		$V = V^{j-1}$. This yields $(s_2^j, (s_{8})_k^j, \kappa^j)$, for all 
		
		$k \in \{1, ..., N_0\}$ and optimal cost $\gamma^j$.
		
		\State $\boldsymbol{V}\textbf{-step}$: decision variables: $V$ and all the multiplier 
		
		polynomials $s$ except $(s_2~\mathrm{and}~ (s_{8})_k)$. Maximize the 
		
		feasibility \cite{YinBackward19} subject to \eqref{eq:Vk_sosopt1}--\eqref{eq:Vk_sosopt5} as well as 
		
		$s_0, s_1 \in \Sigma[(e,\theta)],$ and 
		\begin{align}
		-s_0 \cdot (V^{j-1}& - \gamma^j) +(V - \gamma^j) \nonumber \\
		& \quad ~~~~~~ \ + s_1 \cdot p_\theta \in \Sigma[(e, \theta)], \label{eq:DescentCondi}
		\end{align}
		
		using ($\gamma = \gamma^j$, $s_2 = s_2^j$, $(s_{8})_k = (s_{8})_k^j$, $\kappa=\kappa^j$), 
		
		for all $k \in \{1, ..., N_0\}$. This yields $V^j$.
		\EndFor
		\end{algorithmic}
\end{algorithm}
The constraint \eqref{eq:DescentCondi} enforces the sub-level set certified by the $V$-step, $\{(e,\theta): V^j(e,\theta) \leq \gamma^j\}$, to be contained by the sub-level set from the $\gamma$-step, $\{(e,\theta): V^{j-1}(e,\theta) \leq \gamma^j\}$,  for all $\theta \in \Theta$.


\subsection{Optimal Parameter Selection}
Next, we need to pick the ``optimal" $\tilde{\theta}$, denoted by $\theta^\star$, which gives the ``widest" $\hat{\mathcal{X}}^{\tilde{\theta}}$ subject to \eqref{eq:con_feasPar}. 
The Minkowski sum of $\hat{\mathcal{X}}^{\theta}$ and $\mathcal{O}^{\theta}$ in \eqref{eq:con_feasPar} can be expressed as follows:
\begin{align}
    & \hat{\mathcal{X}}^\theta \oplus \mathcal{O}^\theta \nonumber \\
    =& \{x \in \mathbb{R}^n : x = \hat{x} + e, \ \hat{p}_x (\hat{x}) \leq \hat{h}_x^\theta, \ V(e, \theta) \leq \gamma\}, \nonumber \\
    =& \{x \in \mathbb{R}^n : \hat{p}_x (\hat{x}) \leq \hat{h}_x^\theta, \ V(x - \hat{x}, \theta) \leq \gamma \}, \nonumber \\
    \text{or} \ =& \{x \in \mathbb{R}^n : \hat{p}_x (x-e) \leq \hat{h}_x^\theta, \ V(e, \theta) \leq \gamma \}. \label{eq:Mink_sum}
\end{align}
We assume $\mathcal{X}$ is a semi-algebraic set, which is a sub-level set of a given polynomial function $p(\cdot)$. That is $\mathcal{X} = \{x \in \mathbb{R}^n: p(x) \ge 0\}$. 

\subsubsection*{Optimal Parameter Selection by Sum-of-Squares}
Replacing the constraint \eqref{eq:con_feasPar} with the reformulation as in \eqref{eq:Mink_sum}, we pose the following optimization problem by applying the polynomial S-procedure to \eqref{eq:Mink_sum} to obtain $\theta^\star$. Assume that $\hat{h}_x^\theta$ and $\hat{h}_u^\theta$ are chosen in a way that when $\theta$ grows, $\hat{h}_x^\theta$ and $\hat{h}_u^\theta$ grow as well, by making sure
\begin{align}
\frac{\partial \hat{h}_x^\theta}{\partial \theta} \ge 0 \ \mathrm{and} \ \frac{\partial \hat{h}_u^\theta}{\partial \theta} \ge 0, \ \forall \theta \in \Theta. \label{eq:hx_choice}
\end{align}
Therefore, to find the most permissive constraint sets for the MPC in \eqref{eq:MPC_init}, the summation of all the elements of $\theta$ is chosen as the reward function in
\begin{equation} \label{eq:theta_sosopt}
    \begin{array}{llll}
        \displaystyle \max_{\theta, s_a, s_b} & \sum_{i=1}^{n_{\theta}}(\theta)_i \vspace{2mm} \\ 
        \ \  \text{s.t. }  & \theta \in \Theta, \ s_a, s_b \in \Sigma[(x,e)], \\ 
        & p + s_a \cdot (\hat{p}_x (x - e) - \hat{h}_x^{\theta}) \\
        & \quad + s_b \cdot (V(e, \theta) - \gamma) \in \Sigma[(x,e)],       
    \end{array}
\end{equation}
where $s_a$ and $s_b$ are polynomial multipliers. 
However, \eqref{eq:theta_sosopt} is bi-linear in two sets of decision variables: multipliers ($s_a, s_b$) and ($V(e, \theta))$, $\hat{h}_x^{\theta}$) that are nonlinear functions in $\theta$. Although \eqref{eq:theta_sosopt} is convex in ($s_a$, $s_b$) when $\theta$ is fixed, \eqref{eq:theta_sosopt} is not necessarily convex in $\theta$ when fixing ($s_a$, $s_b$). 
We resolve this issue with the following Proposition. 

\begin{proposition} \label{prop1}
Imposing constraints \eqref{eq:Vk_sosopt3} on $V$, that is, $\frac{\partial V}{\partial \theta} \leq 0$, for all $(e,\theta) \in \R^n \times \Theta$, ensures
\begin{align} 
    \mathcal{O}^{\theta^a} \subseteq \mathcal{O}^{\theta^b}, \ \forall \theta^a \leq \theta^b,  \label{eq:O_grow}
\end{align}
where $\theta^a, \theta^b \in \Theta$. 
\end{proposition}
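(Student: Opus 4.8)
The plan is to reduce the set inclusion \eqref{eq:O_grow} to a pointwise monotonicity statement about $V$ in the parameter $\theta$, and then to establish that monotonicity by integrating the gradient sign condition along a straight-line path in $\theta$-space. First I would unpack the definition $\mathcal{O}^\theta = \{e : V(e,\theta) \leq \gamma\}$ and note that $\mathcal{O}^{\theta^a} \subseteq \mathcal{O}^{\theta^b}$ is equivalent to the implication: for every $e \in \R^n$, $V(e,\theta^a) \leq \gamma$ implies $V(e,\theta^b) \leq \gamma$. A sufficient condition for this implication is the stronger inequality $V(e,\theta^b) \leq V(e,\theta^a)$ for all $e$, since then $V(e,\theta^b) \leq V(e,\theta^a) \leq \gamma$. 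Hence it suffices to prove that, with $e$ held fixed, $V$ is non-increasing as $\theta$ moves from $\theta^a$ to the coordinate-wise larger $\theta^b$.

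To establish this, I would fix $e$, parametrize the segment $\theta(\tau) := (1-\tau)\theta^a + \tau \theta^b$ for $\tau \in [0,1]$, and consider $\phi(\tau) := V(e, \theta(\tau))$. By the chain rule, $\phi'(\tau) = \frac{\partial V}{\partial \theta}(e,\theta(\tau)) \cdot (\theta^b - \theta^a)$. The hypothesis \eqref{eq:Vk_sosopt3} supplies $\frac{\partial V}{\partial \theta} \leq 0$ componentwise on $\Theta$, while $\theta^a \leq \theta^b$ gives $\theta^b - \theta^a \geq 0$ componentwise; consequently the inner product satisfies $\phi'(\tau) \leq 0$ at every $\tau$ for which $\theta(\tau) \in \Theta$. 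Integrating over $[0,1]$ via the fundamental theorem of calculus yields $\phi(1) \leq \phi(0)$, i.e. $V(e,\theta^b) \leq V(e,\theta^a)$, which is exactly the monotonicity needed to close the argument.

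The step that needs care—and the main obstacle—is guaranteeing that the entire segment $\theta(\tau)$ remains in $\Theta$, because the sign condition on $\frac{\partial V}{\partial \theta}$ certified by \eqref{eq:Vk_sosopt3} is only asserted on $\Theta$. This is precisely a convexity requirement on $\Theta$: if $\Theta = \{\theta : p_\theta(\theta) \leq 0\}$ is convex (as holds for the box-type parameterizations typically used, so that $\theta^a, \theta^b \in \Theta$ forces $\theta(\tau) \in \Theta$ for all $\tau \in [0,1]$), then $\phi'(\tau) \leq 0$ along the whole path and the integration is valid. I would therefore state or invoke convexity of $\Theta$ explicitly; absent it, one could only conclude the inclusion for those pairs $\theta^a \leq \theta^b$ whose connecting segment stays feasible.

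As a final remark, I note that the argument never uses the invariance/reachability content of Theorem~\ref{Theo1}; it is a purely geometric consequence of $V$ being coordinate-wise non-increasing in $\theta$ on a convex $\Theta$, which is what makes the optimization \eqref{eq:theta_sosopt} well-behaved under the reward $\sum_i (\theta)_i$ together with the growth conditions \eqref{eq:hx_choice}.
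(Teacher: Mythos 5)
Your proposal is correct and follows essentially the same route as the paper's proof: both reduce the sublevel-set inclusion $\mathcal{O}^{\theta^a} \subseteq \mathcal{O}^{\theta^b}$ to the pointwise monotonicity $V(e,\theta^a) \ge V(e,\theta^b)$ for $\theta^a \le \theta^b$, which the paper asserts directly from $\frac{\partial V}{\partial \theta} \le 0$ and which you justify more carefully by integrating along the segment joining $\theta^a$ to $\theta^b$. Your observation that this integration requires the segment to remain in $\Theta$ (i.e., convexity of $\Theta$) is a genuine detail the paper's one-line proof leaves implicit; it is covered by the paper's assumption that $\Theta$ is a box (Assumption~\ref{ass1}).
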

\begin{proof}
As $\frac{\partial V}{\partial \theta} \leq 0$,  for all $(e, \theta) \in \R^n \times \Theta$, we have $V(e, \theta^a) \ge V(e, \theta^b)$. If an error-state $e$ satisfies $V(e,\theta^a) \leq \gamma$, then it also satisfies $V(e,\theta^b) \leq \gamma$. 
\end{proof}

\subsubsection*{Reformulation for Iterative Convex Optimization}
We can iteratively solve \eqref{eq:theta_sosopt} with Linear Matrix Inequalities (LMIs) if we do not make $\theta$ a decision variable, and instead look for its maximum allowable box bound $\overline{\theta}$ such that $\theta \in [0,\bar{\theta}]$.
Thus, we solve the following reformulated SOS optimization problem as a tractable relaxation to \eqref{eq:theta_sosopt}:
\begin{equation} \label{eq:theta_ub_sos}
    \begin{array}{llll}
        \displaystyle \max_{\bar{\theta}, s_a, s_b, s_c} & \sum_{i=1}^{n_{\theta}}(\bar{\theta})_i \vspace{2mm} \\ 
        \ \  \text{s.t.}  & \bar{\theta} \in \Theta,\\
        & s_{a}, s_{b}, (s_{c})_i \in \Sigma[(x,e,\theta)],~\forall i \in \{1,\dots,n_\theta\}, \\ 
        & p + s_{a} \cdot (\hat{p}_x (x - e) - \hat{h}_x^{\theta}) + s_{b} \cdot (V(e, \theta) - \gamma)\\
        & ~ - \sum_{i=1}^{n_\theta} (s_{c})_i \cdot (\theta)_i \Big ((\bar{\theta})_i - (\theta)_i \Big)\in \Sigma[(x,e,\theta)]. \vspace{1mm}
    \end{array}
\end{equation}
When feasible, \eqref{eq:theta_ub_sos} is a sufficient condition for 
\begin{align}\label{eq:all_theta}
    \hat{\mathcal{X}}^\theta \oplus \mathcal{O}^\theta \subseteq \mathcal{X}, \ \forall \theta \in [0, \bar{\theta}].
\end{align}
Most importantly, optimization problem \eqref{eq:theta_ub_sos} is only bi-linear in $(s_{c})_i$ and $(\bar{\theta})_i$, and can be solved by iteratively searching between $(s_{c})_i$ and $(\bar{\theta})_i$ using Algorithm \ref{alg:alg2}. 
 
 \begin{algorithm} [H]
	\caption{Optimal $\theta$ Selection}
	\label{alg:alg2}
	\begin{algorithmic}[1]
	\Require{$\bar{\theta}^0$ such that constraints in \eqref{eq:theta_ub_sos} are feasible by proper choice of $s_a, s_b, (s_{c})_i$, for all $i \in \{1, ..., n_\theta\}$.}
	\Ensure{$\bar{\theta}$ that has been maximized}
	\For{$j = 1:N_\mathrm{iter}$}
	    \State $\boldsymbol{s}\textbf{-step}$: decision variables: $(s_a, s_b, (s_{c})_i)$.
		
		Maximize the feasibility subject to the constraints 
		
		in \eqref{eq:theta_ub_sos}, using $\bar{\theta} = \bar{\theta}^{j-1}$. This yields $(s_{c})_i^j$.
		
		\State $\boldsymbol{\bar{\theta}}$\textbf{-step}: decision variables: $(s_a, s_b, \bar{\theta})$.
			
		Maximize $\bar{\theta}$ subject to the constraints in \eqref{eq:theta_ub_sos}  using 
		
		$(s_{c})_i = (s_{c})_i^j$ for all $i \in \{1, ..., n_\theta\}$. This yields
		
		an optima $\bar{\theta}^j$ to \eqref{eq:theta_ub_sos}.
		
		\EndFor
		\end{algorithmic}
\end{algorithm}

\begin{assumption} \label{ass1}
We assume $\Theta$ is a box constraint set, and without loss of generality we choose $0$ as the lower bound.
\end{assumption}
\begin{proposition}
Assume Proposition~\ref{prop1} and Assumption~\ref{ass1} hold. Assume $\theta^\star$ and $\bar{\theta}^\star$ as the global optima to \eqref{eq:theta_sosopt} and \eqref{eq:theta_ub_sos} respectively. Then, the optimization problems \eqref{eq:theta_sosopt} and \eqref{eq:theta_ub_sos} are equivalent. That is, $\theta^\star = \bar{\theta}^\star$. 
\end{proposition}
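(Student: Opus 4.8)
The plan is to show that the two programs share the same feasible set in their common decision vector and optimize the same linear objective $\sum_{i=1}^{n_\theta}(\cdot)_i$, whence the global optima must coincide. The engine behind this is a monotonicity property of the safety constraint, which I would establish first. By Proposition~\ref{prop1}, the imposed condition $\partial V/\partial\theta\leq 0$ yields $\mathcal{O}^{\theta^a}\subseteq\mathcal{O}^{\theta^b}$ whenever $\theta^a\leq\theta^b$; by the growth condition \eqref{eq:hx_choice} the threshold $\hat{h}_x^\theta$ is nondecreasing, so the sub-level set $\hat{\mathcal{X}}^\theta$ is nondecreasing in $\theta$ as well. Since set inclusion is preserved under Minkowski sums, $\hat{\mathcal{X}}^\theta\oplus\mathcal{O}^\theta$ is nondecreasing in $\theta$, and therefore the constraint \eqref{eq:con_feasPar} is \emph{downward closed} over the box $\Theta$ of Assumption~\ref{ass1}: if it holds at $\theta$ and $0\leq\theta'\leq\theta$, it holds at $\theta'$.

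For the inclusion that every optimizer $\bar\theta^\star$ of \eqref{eq:theta_ub_sos} is feasible for \eqref{eq:theta_sosopt}, I would exploit the fact that the box-constraint term $\sum_{i}(s_c)_i\,(\theta)_i\big((\bar\theta)_i-(\theta)_i\big)$ vanishes identically at $\theta=\bar\theta$. Evaluating the SOS certificate of \eqref{eq:theta_ub_sos} at $\theta=\bar\theta^\star$ then leaves exactly the certificate demanded by \eqref{eq:theta_sosopt}, with multipliers $s_a(\cdot,\cdot,\bar\theta^\star)$ and $s_b(\cdot,\cdot,\bar\theta^\star)$; since substituting a constant into a sum of squares preserves the SOS property, these are valid elements of $\Sigma[(x,e)]$. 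Hence $\bar\theta^\star$ is feasible for \eqref{eq:theta_sosopt}, which gives $\sum_i(\bar\theta^\star)_i\leq\sum_i(\theta^\star)_i$.

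For the reverse inclusion I would invoke downward-closedness. Feasibility of $\theta^\star$ in \eqref{eq:theta_sosopt} certifies $\hat{\mathcal{X}}^{\theta^\star}\oplus\mathcal{O}^{\theta^\star}\subseteq\mathcal{X}$; by the downward-closedness from the first step this propagates to $\hat{\mathcal{X}}^\theta\oplus\mathcal{O}^\theta\subseteq\mathcal{X}$ for all $\theta\in[0,\theta^\star]$, which is precisely condition \eqref{eq:all_theta} with $\bar\theta=\theta^\star$ — the very containment that \eqref{eq:theta_ub_sos} is built to certify. Consequently $\bar\theta=\theta^\star$ is feasible for \eqref{eq:theta_ub_sos}, so $\sum_i(\theta^\star)_i\leq\sum_i(\bar\theta^\star)_i$. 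The two steps together show that the programs have the same feasible set in the decision vector $\theta$ and optimize the same linear objective; hence their optimal values agree, and since each optimizer of one problem is feasible and optimal for the other, the global optima coincide, i.e.\ $\theta^\star=\bar\theta^\star$.

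The main obstacle is the reverse direction, and it is exactly where Proposition~\ref{prop1} and \eqref{eq:hx_choice} are indispensable: without monotonicity the box requirement \eqref{eq:all_theta} would in general be strictly stronger than the single-point requirement \eqref{eq:con_feasPar}, and \eqref{eq:theta_ub_sos} would be a strict, lossy relaxation of \eqref{eq:theta_sosopt}. The one point I would treat with care is the passage from the geometric box containment \eqref{eq:all_theta} to the existence of a single uniform S-procedure certificate over the entire box in \eqref{eq:theta_ub_sos}; monotonicity is the structural reason this relaxation is lossless, since the constraint only weakens as $\theta$ decreases from the corner $\bar\theta$, and I would argue that the corner multipliers extend over the box (consistent with the paper's treatment of the S-procedure certificates as tight for this equivalence).
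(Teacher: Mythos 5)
Your proof is correct and takes essentially the same route as the paper: both arguments hinge on the downward-closedness of the containment \eqref{eq:con_feasPar} over the box $\Theta$, obtained by combining Proposition~\ref{prop1} (via \eqref{eq:O_grow}) with the monotonicity condition \eqref{eq:hx_choice}, and then conclude that the box requirement \eqref{eq:all_theta} coincides with single-point feasibility so that the relaxation \eqref{eq:theta_ub_sos} is lossless. Your two-inequality organization, the explicit certificate-substitution argument at $\theta=\bar{\theta}^\star$ (where the box multiplier term vanishes), and your flagging of the geometric-containment-versus-SOS-certificate subtlety are all more careful than the paper's terse proof, which works purely at the geometric level, but the underlying idea is identical.
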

\proof
 For all $\tilde{\theta} \leq \theta^\star$, by \eqref{eq:hx_choice}, it yields $\hat{\mathcal{X}}^{\tilde{\theta}} \subseteq \hat{\mathcal{X}}^{\theta^\star}$. It follows from \eqref{eq:O_grow} that $\mathcal{O}^{\tilde{\theta}} \subseteq \mathcal{O}^{\theta^\star}$. Since $\theta^\star$ satisfies $\hat{\mathcal{X}}^{\theta^\star} \oplus \mathcal{O}^{\theta^\star} \subseteq \mathcal{X}$, we have $\hat{\mathcal{X}}^{\tilde{\theta}} \oplus \mathcal{O}^{\tilde{\theta}} \subseteq \mathcal{X}$, for all $\tilde{\theta} \leq \theta^\star$. It follows from Assumption~\ref{ass1} that the feasible set of $\theta$ for \eqref{eq:theta_sosopt} is $\mathcal{F}_\theta := \{\theta \in \Theta: 0 \leq \theta \leq \theta^\star\}$, which from \eqref{eq:all_theta} implies $\theta^\star = \bar{\theta}^\star$. This proves the Proposition.
\endproof


\section{Model reduction} \label{ModelReduction}
In practice it may be desirable to simplify the low-fidelity model further by reducing the state dimension. To make the reduced states comparable to the original states, we define an appropriate map $\pi : \mathbb{R}^{\hat{n}} \rightarrow \mathbb{R}^{n}$, $\hat{n} \leq n$, and redefine the error-states as $e(t) = x(t) - \pi(\hat{x}(t))$. The map $\pi(\cdot)$ needs to be chosen with care according to the specific application and control objective. Accordingly, for the newly defined error-states, $f_e$ and $g_e$ in \eqref{eq:err_dyn} become
\begin{align}
    &f_e(e,\hat{x},\hat{u},\delta) := f(e + \pi(\hat{x}),\delta)  - \frac{\partial \pi}{\partial \hat{x}}(\hat{f}(\hat{x}) + \hat{g}(\hat{x})\hat{u}), \nonumber \\
    &g_e(e, \hat{x}, \delta) := g(e+\pi(\hat{x}), \delta).  \nonumber
\end{align}
Without any modification, optimization \eqref{eq:Vk_sosopt} can still be used to compute parametric error bounds and control law for the error dynamics with model reduction. However, the optimization for finding optimal parameter will need to change, since the constraint  \eqref{eq:con_feasPar} now becomes 
$\pi\left(\hat{\mathcal{X}}^\theta \right) \oplus \mathcal{O}^\theta \subseteq \mathcal{X}$, where $\pi\left(\hat{\mathcal{X}}^\theta \right) := \{\eta \in \mathbb{R}^n : \eta = \pi(\hat{x}), \ \hat{p}_x (\hat{x}) \leq \hat{h}_x^\theta\}.$ Then, the Minkowski sum of $\pi\left(\hat{\mathcal{X}}^\theta \right)$ and $\mathcal{O}^\theta$
amounts to 
\begin{align}
   & \pi\left(\hat{\mathcal{X}}^\theta \right) \oplus \mathcal{O}^\theta = \{x \in \mathbb{R}^n : x = \eta + e, \ \eta = \pi(\hat{x}), \nonumber \\
    &~~~~~~~~~~~~~~~~~~~~~~~~~~~  \hat{p}_x(\hat{x}) \leq \hat{h}_x^\theta, \ V(e, \theta) \leq \gamma\}, \nonumber \\
    & = \{x \in \mathbb{R}^n: \hat{p}_x (\hat{x}) \leq \hat{h}_x^\theta, \ V(x - \pi(\hat{x}), \theta) \leq \gamma\}. \nonumber 
\end{align}
To render the parameter selection process tractable, we look for a maximum allowable box bound $\bar{\theta}$ that makes $\pi\left(\hat{\mathcal{X}}^\theta \right) \oplus \mathcal{O}^\theta \subseteq \mathcal{X}, \ \forall \theta \in [0, \bar{\theta}]$ feasible by replacing the constraint in \eqref{eq:theta_ub_sos} with the following constraint 
\begin{align}
    &p + s_d \cdot \left(\hat{p}_x (\hat{x}) - \hat{h}_x^{\theta}\right) + s_e \cdot \left(V (x - \pi(\hat{x}), \theta) - \gamma \right) \nonumber \\ 
    &~~~~~~- \sum_{j=1}^{n_\theta}(s_{f})_j \cdot (\theta)_j\left((\bar{\theta})_j - (\theta)_j\right)\in \Sigma[(x,\hat{x},\theta)], \nonumber
\end{align}
where $s_d, s_e, (s_{f})_j \in \Sigma[(x, \hat{x}, \theta)]$ for all $j \in \{1,...,n_\theta\}$.

\section{Planner Feasibility and Tracker Constraint Satisfaction}
Once $\bar{\theta}^\star$ is fixed, the high level planner, that is the MPC just has to solve the following reformulation of \eqref{eq:MPC_init}:
\begin{equation} \label{eq:MPC_plan}
    \begin{array}{llll}
        \displaystyle \min_{\hat{U}_t} & \sum \limits_{k=0}^{N-1}(\hat{x}_{k|t}^\top Q \hat{x}_{k|t} + \hat{u}_{k|t} R \hat{u}_{k|t}  ) + \hat{x}_{N|t}^\top P_N \hat{x}_{N|t} \vspace{2mm} \\ 
        \ \  \text{s.t. }  & 
         \hat{x}_{k+1|t} = \hat{F}_d(\hat{x}_{k|t}, \hat{u}_{k|t}, T_s),\\
         & \hat{x}_{k|t} \in \hat{\mathcal{X}}^{\bar{\theta}^\star},~ \hat{u}_{k|t} \in \hat{\mathcal{U}}^{\bar{\theta}^\star},\\
         & \forall k \in \{0,\dots,N-1\},\\
         & \hat{x}_{t|t} = \hat{x}(t),~\hat{x}_{N|t} \in \hat{\mathcal{X}}_N \subseteq \hat{\mathcal{X}}^{\bar{\theta}^\star},
    \end{array}
\end{equation}
with $Q,R,P_N \succ 0$. We solve \eqref{eq:MPC_plan} at any time $t$ and then apply the first input 
\begin{align}\label{eq:mpc_cl}
    \hat{u}(t) = \hat{u}^\star_{0|t}
\end{align}
to \eqref{eq:low-fide-model}. We then re-solve \eqref{eq:MPC_plan} at the next instant $t+T_s$ and repeat in receding horizon fashion.  
\begin{assumption}\label{ass:mpc_feas}
We assume recursive feasibility of \eqref{eq:MPC_plan}. That is, if \eqref{eq:MPC_plan} is feasible at time $t = 0$, it remains feasible for all times $t \geq 0$, when \eqref{eq:mpc_cl} is applied to \eqref{eq:low-fide-model}. 
\end{assumption}

Recursive feasibility of  a nonlinear planner can be achieved by picking a ``long" prediction horizon $N$ as mentioned in \cite{michMayne, mayne2000constrained}. However in this case, problem \eqref{eq:MPC_plan} remains non-convex. An alternative way of ensuring recursive feasibility of \eqref{eq:MPC_plan} while solving a convex problem is by resorting to linear time invariant planner dynamics $\hat{x}(t+T_s) = \hat{A}\hat{x}(t) + \hat{B}\hat{u}(t)$ and then appropriately choosing terminal conditions $\hat{\mathcal{X}}_N$ and $P_N$. Matrices $\hat{A}, \hat{B}$ can be chosen with OLS approximation \cite{dean2017sample} of \eqref{eq:low-fide-model}. 

\begin{proposition}
Let problem \eqref{eq:theta_ub_sos} be feasible. Let Assumption~\ref{ass:mpc_feas} hold true. Assume initial error-states satisfy $e(0) \in \Omega$.  Then system variable $x(t)$ associated to tracker evolving according to \eqref{eq:nonl_system} satisfies $x(t) \in \mathcal{X}$ for all times $t \geq 0$ under the policy $\kappa^{\bar{\theta}^\star}$. 
\end{proposition}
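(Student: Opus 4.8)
The plan is to chain together the three guarantees already established — the set-containment certified by feasibility of \eqref{eq:theta_ub_sos}, the error confinement of Theorem~\ref{Theo1}, and the recursive feasibility of Assumption~\ref{ass:mpc_feas} — and then conclude by decomposing the tracker state through a Minkowski sum. Concretely, since the realized tracker state obeys $x(t) = \hat{x}(t) + e(t)$ for all $t \geq 0$, it suffices to show three things: that $\hat{x}(t) \in \hat{\mathcal{X}}^{\bar{\theta}^\star}$, that $e(t) \in \mathcal{O}^{\bar{\theta}^\star}$, and that $\hat{\mathcal{X}}^{\bar{\theta}^\star} \oplus \mathcal{O}^{\bar{\theta}^\star} \subseteq \mathcal{X}$. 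The conclusion $x(t) \in \mathcal{X}$ then follows immediately from the definition of the Minkowski sum.

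First I would establish the error confinement $e(t) \in \mathcal{O}^{\bar{\theta}^\star}$ for all $t \geq 0$. By Assumption~\ref{ass:initial} the initial error satisfies $e(0) \in \Omega$, and condition \eqref{eq:InitCond} instantiated at $\theta = \bar{\theta}^\star$ gives $\Omega \subseteq \mathcal{O}^{\bar{\theta}^\star} = \{e : V(e, \bar{\theta}^\star) \leq \gamma\}$, so $e(0) \in \mathcal{O}^{\bar{\theta}^\star}$. Provided the driving signals remain in the required sets — that is $\hat{x}(t) \in \hat{\mathcal{X}}^{\bar{\theta}^\star}$, $\hat{u}(t) \in \hat{\mathcal{U}}^{\bar{\theta}^\star}$, and $\delta(t) \in \Delta$ — Theorem~\ref{Theo1} certifies that $\mathcal{O}^{\bar{\theta}^\star}$ is a positively invariant forward reachable set of $\Omega$ under the closed-loop error dynamics \eqref{eq:err_closedloop} driven by $\kappa^{\bar{\theta}^\star}$. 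Hence the realized error trajectory cannot leave $\mathcal{O}^{\bar{\theta}^\star}$, yielding $e(t) \in \mathcal{O}^{\bar{\theta}^\star}$ for all $t \geq 0$.

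Next I would supply the planner feasibility that the previous step presumed. By Assumption~\ref{ass:mpc_feas}, problem \eqref{eq:MPC_plan} is feasible for all $t \geq 0$, so its applied solution satisfies the imposed constraints $\hat{x}_{k|t} \in \hat{\mathcal{X}}^{\bar{\theta}^\star}$ and $\hat{u}_{k|t} \in \hat{\mathcal{U}}^{\bar{\theta}^\star}$; applying the first input \eqref{eq:mpc_cl} to \eqref{eq:low-fide-model} in receding-horizon fashion therefore keeps the realized planner signals inside $\hat{\mathcal{X}}^{\bar{\theta}^\star}$ and $\hat{\mathcal{U}}^{\bar{\theta}^\star}$, closing the loop on the hypothesis used above. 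Finally, feasibility of \eqref{eq:theta_ub_sos} is, by the sufficiency established for \eqref{eq:all_theta}, a certificate that $\hat{\mathcal{X}}^{\theta} \oplus \mathcal{O}^{\theta} \subseteq \mathcal{X}$ for all $\theta \in [0, \bar{\theta}^\star]$, in particular at $\theta = \bar{\theta}^\star$. Combining the three facts gives $x(t) = \hat{x}(t) + e(t) \in \hat{\mathcal{X}}^{\bar{\theta}^\star} \oplus \mathcal{O}^{\bar{\theta}^\star} \subseteq \mathcal{X}$ for all $t \geq 0$.

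The main obstacle is the discrete-to-continuous-time gap. Recursive feasibility of the MPC and the constraint satisfaction $\hat{x}_{k|t} \in \hat{\mathcal{X}}^{\bar{\theta}^\star}$ are statements only at the sampling instants $kT_s$, whereas Theorem~\ref{Theo1} requires $\hat{x}(t) \in \hat{\mathcal{X}}^{\bar{\theta}^\star}$ and $\hat{u}(t) \in \hat{\mathcal{U}}^{\bar{\theta}^\star}$ to hold in continuous time for the invariance of $\mathcal{O}^{\bar{\theta}^\star}$ to apply to the continuous-time error flow. Bridging this gap relies on the assumption made in the earlier Remark that the planner sample frequency is chosen high enough for the continuous-time guarantees to carry over between samples, equivalently that the inter-sample planner evolution stays within the parametrized sets. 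Granting that assumption, every remaining step is a direct invocation of the results already in place, so the rest of the argument is routine.
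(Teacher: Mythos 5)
Your proof is correct and follows essentially the same chain as the paper's: error confinement $e(t) \in \mathcal{O}^{\bar{\theta}^\star}$ from the forward reachable set property, planner constraint satisfaction $\hat{x}(t) \in \hat{\mathcal{X}}^{\bar{\theta}^\star}$ from Assumption~\ref{ass:mpc_feas}, and the containment $\hat{\mathcal{X}}^{\bar{\theta}^\star} \oplus \mathcal{O}^{\bar{\theta}^\star} \subseteq \mathcal{X}$ from feasibility of \eqref{eq:theta_ub_sos}, combined through the Minkowski-sum decomposition $x(t) = \hat{x}(t) + e(t)$. If anything, yours is slightly more careful than the paper's own write-up, which loosely attributes the set containment to feasibility of \eqref{eq:MPC_plan} rather than to \eqref{eq:theta_ub_sos} via \eqref{eq:all_theta}, and which leaves the inter-sample (discrete-to-continuous) issue to the earlier remark that you cite explicitly.
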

\begin{proof}
Let \eqref{eq:theta_ub_sos} be feasible and Assumption~\ref{ass:mpc_feas} hold. Since, $\mathcal{O}^{\bar{\theta}^\star}$ is a forward reachable set of $\Omega$ under policy $\kappa^{\bar{\theta}^\star}$, we have $e(0) \in \Omega \implies e(t) \in \mathcal{O}^{\bar{\theta}^\star}$, for all $t \geq 0$. Therefore, feasibility of \eqref{eq:MPC_plan} guarantees $\hat{\mathcal{X}}^{\bar{\theta}^\star} \oplus \mathcal{O}^{\bar{\theta}^\star} \subseteq \mathcal{X}$, implying ${x}(t) \in \mathcal{X}$ for all $t \geq 0$.   
\end{proof}

\section{Numerical Example:  Double Pendulum}
In this section we present a numerical example with our proposed Algorithm~1 and Algorithm~2. 
For the fully-actuated double pendulum example from \cite{YinStan2019}, the polynomial dynamics obtained from a least-squares approximation for $(x_1, x_3) \in [-1,1] \times [-1, 1]$ are 
\begin{align}
\bmat{\dot{x}_1 \\ \dot{x}_2 \\ \dot{x}_3 \\ \dot{x}_4} &= \bmat{x_2 \\ f_2(x_1, x_2, x_3, x_4) \\ x_4 \\ f_4(x_1, x_2, x_3, x_4)} + \bmat{0 & 0 \\ 8 & -31.2 \\ 0 & 0 \\ -31.2 & 391.2}\bmat{u_1 \\ u_2}, \nonumber
\end{align}
\begingroup
\begin{align}
f_2 &=   - 3.447 x_1^3 + 2.350 x_1^2 x_3 + 1.303 x_1 x_3^2
  + 3.939 x_3^3 \nonumber \\
  & ~~~~~~~~~~~~~~~~~~~~~~~~~~~~~~~~ + 21.520 x_1 - 5.000 x_3, \nonumber \\
f_4 &=  4.023 x_1^3 - 36.551 x_1^2 x_3 - 4.131 x_2^2 x_3 - 27.060 x_3^3 \nonumber \\
& ~~~~~~~~~~~~~~~~~~~~~~~~~~~~~~~~ - 25.115 x_1 + 77.700 x_3, \nonumber
\end{align}
\endgroup
where $x_1$ and $x_3$ are angular positions of the first and second links (relative to the first link), $x_2$ and $x_4$ are angular velocities of the first and second links (relative to the first link), $u_1$ and $u_2$ are torques applied at the joint 1 and joint 2. The angular positions and applied torques are shown in Fig.~\ref{fig:angles}. The control objectives are: $(i)$ to bring $x$ from initialized $(-0.57, 0.52, 0, 0.02)$ to target $(0.3, 0, 0, 0)$ and maintain it there, and $(ii)$ to satisfy state constraints
\begin{align}\label{eq:trac_con_num}
\mathcal{X} := \{(x_1, x_2) : \vert x_1 \vert \leq 0.6, \ \vert x_2 \vert \leq 1.3\}. 
\end{align}

\subsection{Planner Parametrization} 
Based on the control objective, we use a single inverted pendulum as the low-fidelity model to generate planning trajectories $(\hat{x}_1(t), \hat{x}_2(t))$ for the planner. The polynomial dynamics of this low-fidelity planner are given as
\begin{align}
    \bmat{\dot{\hat{x}}_1 \\ \dot{\hat{x}}_2} = \bmat{\hat{x}_2 \\ -5.131 \hat{x}_1^3 + 32.1 \hat{x}_1} + \bmat{0 \\ 9.1}\hat{u}, \nonumber
\end{align}
where $\hat{x}_1$ represents the angular position of the single inverted pendulum (shown in Fig.~\ref{fig:angles}), $\hat{x}_2$ is the angular velocity, and $\hat{u}$ is the torque applied at joint 1.
\begin{figure}[h]
	\centering
	\includegraphics[width=0.25\textwidth]{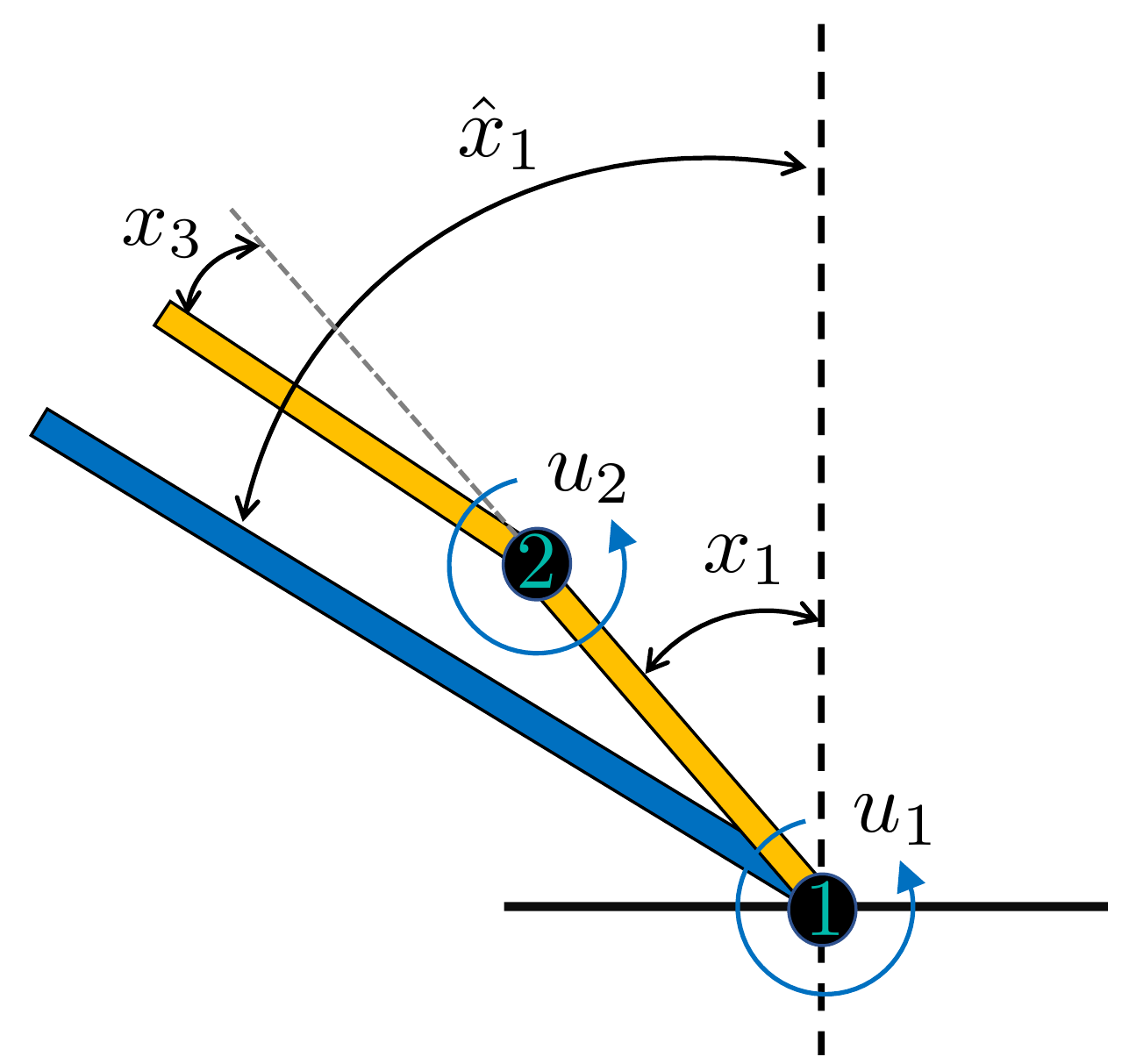}
	\caption{Double pendulum and its abstraction as a single pendulum. The angular positions are labelled.}
	\label{fig:angles}    
\end{figure}
We want $(x_1(t), x_2(t))$ to track $(\hat{x}_1(t), \hat{x}_2(t))$, while enforcing $(x_3(t), x_4(t))$ to stay close to the origin. Therefore, the map $\pi(\cdot)$ is chosen to be
\begin{align}
    \pi(\hat{x}) = P \hat{x}, \ \text{where} \ P = [\boldsymbol{I}_2, \boldsymbol{0}_{2 \times 2}]^\top. \nonumber
\end{align}
The constraint sets for the planner are parametrized by
\begin{subequations}\label{eq:plann_par_num}
\begin{align}
& \hat{\mathcal{U}}^\theta = \{\hat{u} \in \R : \vert \hat{u} \vert \leq 5 \},\\  & \hat{\mathcal{X}}^\theta = \{\hat{x} \in \R^2 : \vert \hat{x}_1 \vert \leq 0.6 \theta_1, \ \vert \hat{x}_2 \vert \leq 1.3 \theta_2\}, 
\end{align}
\end{subequations}
for all $ \theta = (\theta_1,\theta_2)$ such that $\theta_1 , \theta_2 \in [0, 1]$. Take the set of initial conditions for error-states as $\Omega = \{e \in \R^4:e_1=e_2=e_3=0, -0.03\leq e_4 \leq 0.03\}$.


\subsection{Parametric Error Bound Computation}
 In this example, $V$ is chosen to be a degree-2 polynomial in $(e, \theta)$, and $\kappa$ is chosen to be a degree-4 polynomial in $(e, \hat{x}, \hat{u}, \theta)$. The SOS optimizations in Algorithm~\ref{alg:alg1} are formulated using the sum-of-squares module SOSOPT \cite{Pete:13} on MATLAB. 
 and solved by Mosek \cite{Mosek:17}. 
After solving \eqref{eq:Vk_sosopt}, we obtain the parametric error bound $\mathcal{O}^\theta$ and the associated feedback controller for tracker, $\kappa^\theta$. 

\subsection{Optimal Planner-Tracker Design}
In this section we highlight the ``safety by design" aspect of Algorithm~\ref{alg:alg2}, as a consequence of solving \eqref{eq:theta_ub_sos}. Instead of fixing the planner constraint sets $\hat{\mathcal{X}}$ and $\hat{\mathcal{U}}$ heuristically as in \cite{YinStan2019}, we enmesh the planner-tracker design phases, looking for the best parameter $\bar{\theta}^\star$ in \eqref{eq:plann_par_num} that satisfies \eqref{eq:con_feasPar}. The inclusion of Algorithm~\ref{alg:alg2} inherently ensures safety (satisfaction of constraints \eqref{eq:trac_con_num} by tracker states $x(t)$ for all times $t$) by design, while simultaneously allowing for the maximum permissiveness of the planner in \eqref{eq:MPC_plan}. For the following simulations, we set $\hat{x}(0) = (-0.57, 0.52)$, i.e. $e(0)=(0,0,0,0.02)$. 

\subsubsection{Failure of Heuristics}
In search for the most permissive planner, the first planner design scenario involves setting $\hat{\mathcal{X}}^{\tilde{\theta}} = \mathcal{X}$ 
, i.e. $\tilde{\theta} = (1,1)$ in \eqref{eq:plann_par_num}.
As expected, the tracker can easily violate safety constraints \eqref{eq:trac_con_num}.
\textcolor{black}{For satisfying \eqref{eq:trac_con_num} by the tracker}, we next use our heuristics and set $\tilde{\theta} = (0.99,0.99)$ and $(0.98,0.98)$ in the next two cases respectively. We see from Fig.~\ref{fig:result_theta_99percent} and Fig.~\ref{fig:result_theta_98percent} that both $\mathcal{O}^{(0.99,0.99)}$ and $\mathcal{O}^{(0.98,0.98)}$ cross the safety constraints $\mathcal{X}$ given in \eqref{eq:trac_con_num}. Hence both the heuristic parameters are rendered invalid. In fact in Fig.~\ref{fig:result_theta_99percent}, we also see the tracker trajectory violating \eqref{eq:trac_con_num}.
\begin{figure}[h]
	\centering
	\includegraphics[width=1\columnwidth]{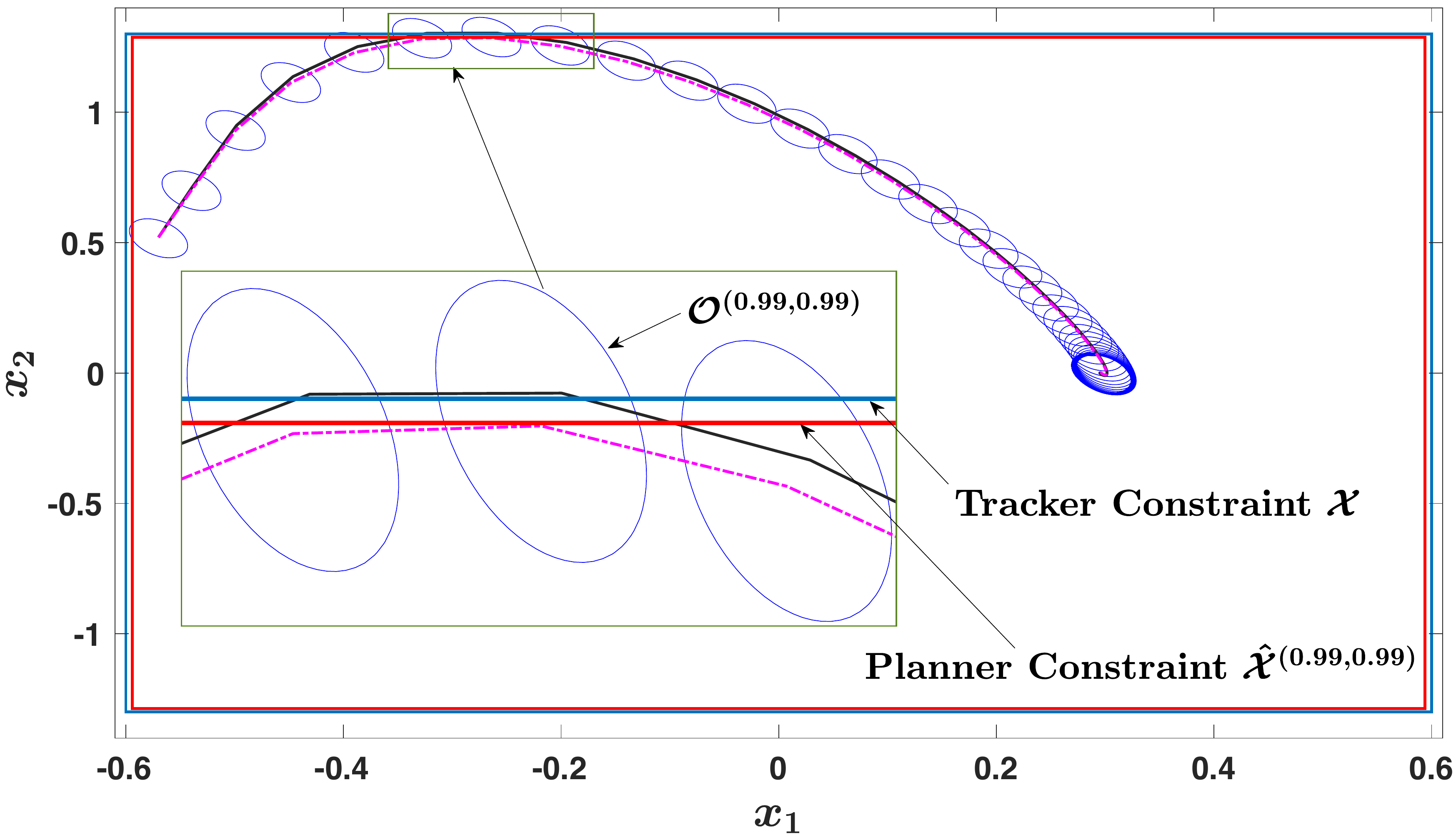}
	\caption{Planner design using $\tilde{\theta} = (0.99,0.99)$. Dashed purple curve denotes planner trajectory and solid black curve is corresponding tracker trajectory.}
	\label{fig:result_theta_99percent}  
\end{figure}
\begin{figure}[h]
	\centering
	\includegraphics[width=1\columnwidth]{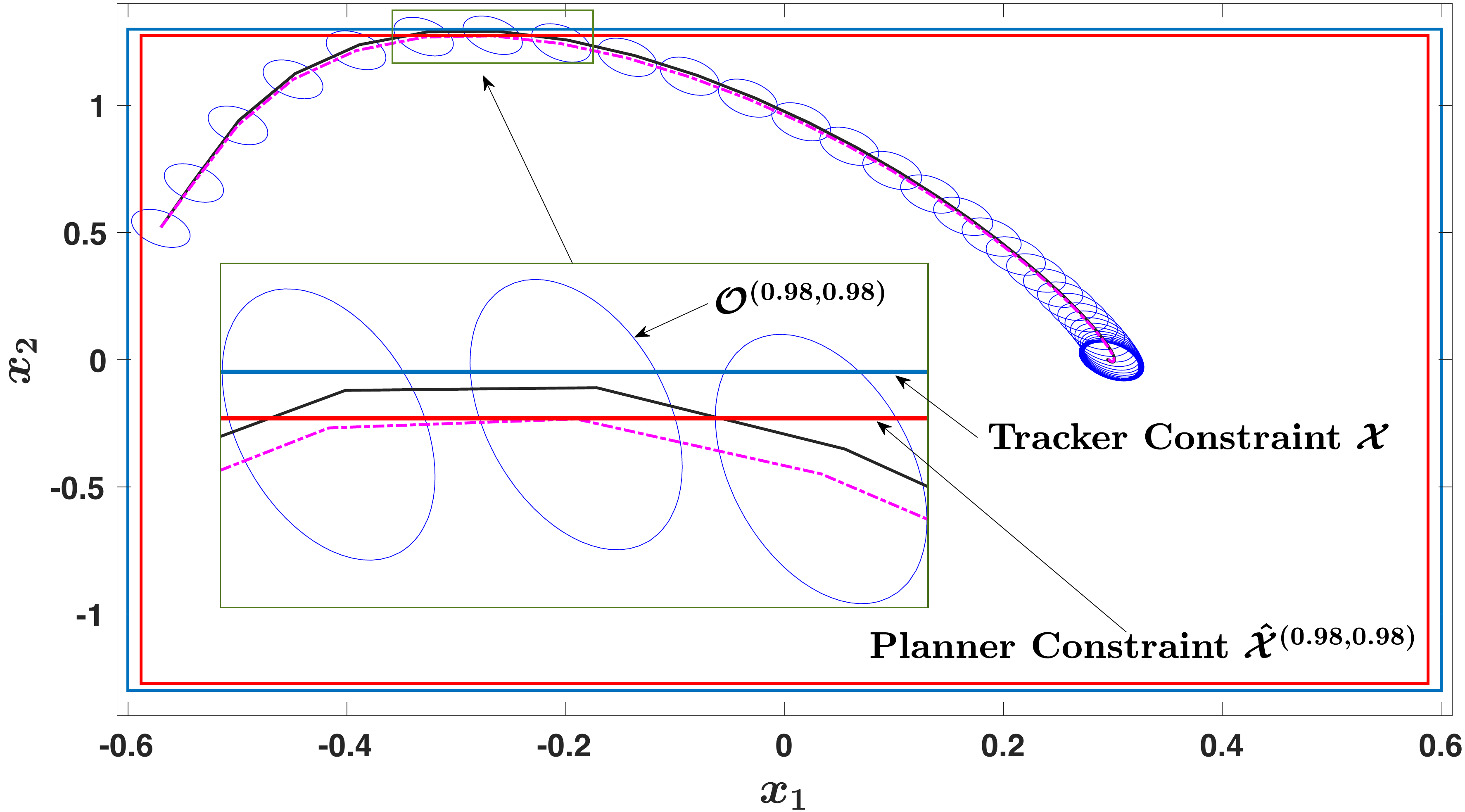}
	\caption{Planner design using $\tilde{\theta} = (0.98,0.98)$.}
	\label{fig:result_theta_98percent}  
\end{figure}

\subsubsection{Optimal Parametrization}
Using our Algorithm~\ref{alg:alg2}, the computed $\bar{\theta}^\star = (0.954, \ 0.940)$, i.e. the most permissive planner state constraint set is $\hat{\mathcal{X}}^{\bar{\theta}^\star} = \{\hat{x} \in \R^2 : \vert \hat{x}_1 \vert \leq 0.5724, \ \vert \hat{x}_2 \vert \leq 1.2220 \}$. In Fig.~\ref{fig:result_theta_star}, the planner uses $\hat{\mathcal{X}}^{\bar{\theta}^\star}$ as the state constraint. We can see that 
the error bound $\mathcal{O}^{\bar{\theta}^\star}$ around the planner trajectory remains within $\mathcal{X}$, which guarantees the safety of the tracker trajectory. The tracker trajectory never violates $\mathcal{X}$. This highlights that Algorithm~\ref{alg:alg2} provides safety guarantees, and enables the designer to avoid repeated planner-tracker design in search for safety. 
\begin{figure}[h]
	\centering
	\includegraphics[width=1\columnwidth]{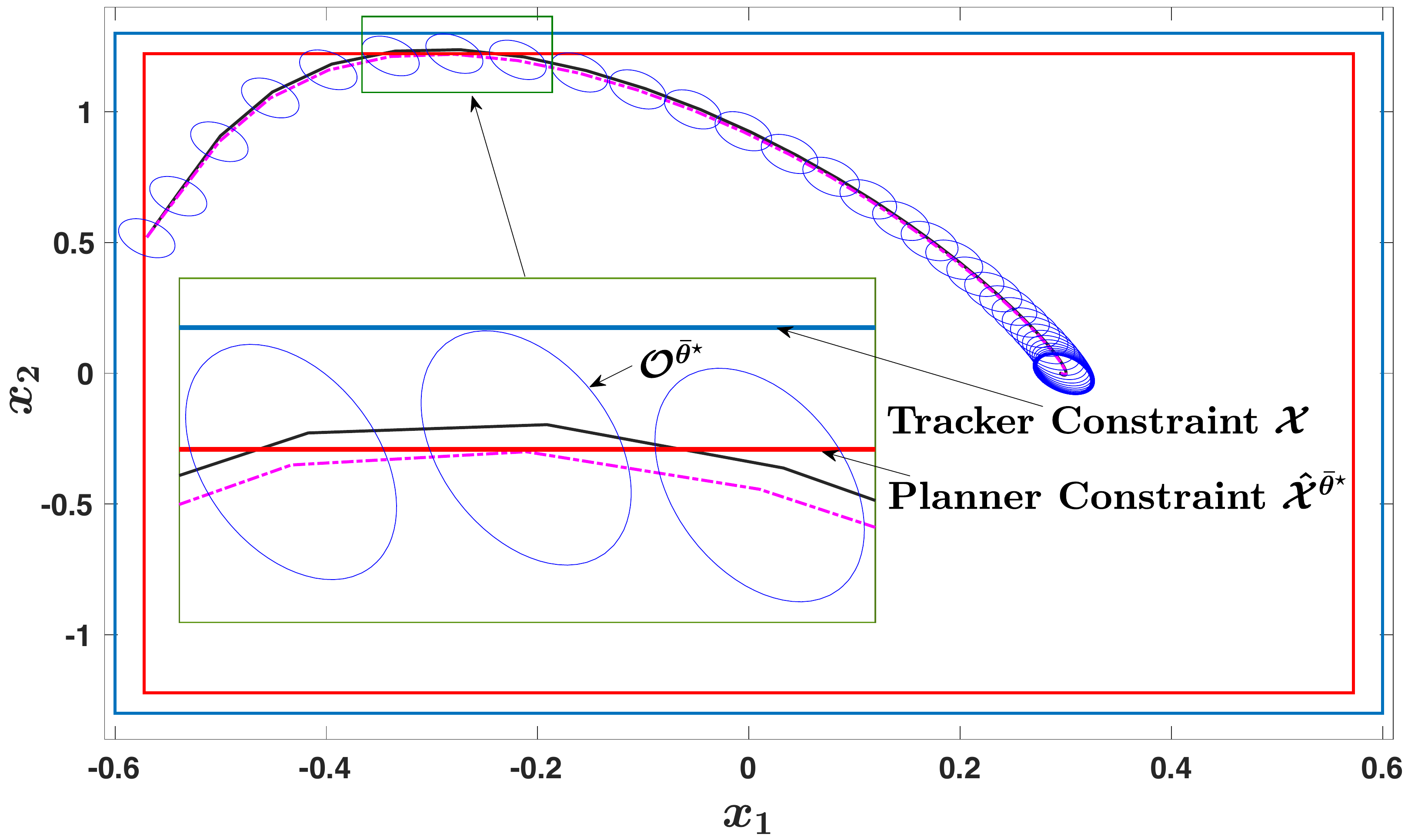}
	\caption{Planner and tracker design with optimal $\bar{\theta}^\star$. }
	\label{fig:result_theta_star}  
\end{figure}

\section{Conclusions}
    We presented an optimization based safe-by-design approach of trajectory planning--tracking for nonlinear systems. 
    Instead of heuristically picking the constraints imposed on the planner, we parametrized them with additional design parameters. Consequently, the tracking error bound and the tracking control law are parametrized too, and are computed through Sum-of-Squares programming (Algorithm~\ref{alg:alg1}). The optimal design parameters are chosen (Algorithm~\ref{alg:alg2}) specifically ensuring tracker safety along with maximum permissiveness of the planner. 
    \balance
\section*{Acknowledgements}
We thank professor Francesco Borrelli for providing helpful commetns. This work was supported in part by the grants ONR-N00014-18-1-2209, ONR-N00014-18-1-2833, AFOSR FA9550-18-1-0253, and NSF ECCS-1906164. 



\renewcommand{\baselinestretch}{0.91}
\bibliographystyle{IEEEtran}
\bibliography{IEEEabrv,bibliography.bib} 
\renewcommand{\baselinestretch}{1}


\end{document}